\newcolumntype{?}{!{\vrule width 0.7pt}}
\newcommand\Tstrut{\rule{0pt}{2.6ex}}   
\newcommand\Bstrut{\rule[-1.2ex]{0pt}{0pt}} 
\newtheorem{prop}{Proposition}
\begin{document}

\title{Measuring Quadrangle Formation in Complex Networks}

\author{Mingshan Jia, Bogdan Gabrys,~\IEEEmembership{Senior Member,~IEEE,} and Katarzyna Musial
\IEEEcompsocitemizethanks{\IEEEcompsocthanksitem M. Jia, B. Gabrys and K. Musial are with the School of Computer Science, University of Technology Sydney, Ultimo NSW 2007, Australia.
E-mail: mingshan.jia@student.uts.edu.au,
\{bogdan.gabrys, katarzyna.musial-gabrys\}@uts.edu.au}
}

\markboth{Journal of \LaTeX\ Class Files,~Vol.~14, No.~8, August~2015}%
{Shell \MakeLowercase{\textit{et al.}}: Bare Demo of IEEEtran.cls for Computer Society Journals}

\IEEEtitleabstractindextext{%
\begin{abstract}
The classic clustering coefficient and the lately proposed closure coefficient quantify the formation of triangles from two different perspectives, with the focal node at the centre or at the end in an open triad respectively. As many networks are naturally rich in triangles, they become standard metrics to describe and analyse networks. However, the advantages of applying them can be limited in networks, where there are relatively few triangles but which are rich in quadrangles, such as the protein-protein interaction networks, the neural networks and the food webs. This yields for other approaches that would leverage quadrangles in our journey to better understand local structures and their meaning in different types of networks. Here we propose two quadrangle coefficients, i.e., the i-quad coefficient and the o-quad coefficient, to quantify quadrangle formation in networks, and we further extend them to weighted networks. Through experiments on $16$ networks from six different domains, we first reveal the density distribution of the two quadrangle coefficients, and then analyse their correlations with node degree. Finally, we demonstrate that at network-level, adding the average i-quad coefficient and the average o-quad coefficient leads to significant improvement in network classification, while at node-level, the i-quad and o-quad coefficients are useful features to improve link prediction.
\end{abstract}

\begin{IEEEkeywords}
clustering coefficient, closure coefficient, quadrangle coefficient, network classification, link prediction.
\end{IEEEkeywords}}

\maketitle

\IEEEdisplaynontitleabstractindextext

\IEEEpeerreviewmaketitle

\IEEEraisesectionheading{\section{Introduction}\label{sec:introduction}}

\IEEEPARstart{C}{omplex} systems across various domains, such as biology, ecology, physics and social science, can be modelled as networks that abstract the interactions between system's components \cite{barabasi2016network, newman2018networks, musial2013social}. Different from a simple grid graph or a line graph for image or text modelling respectively, the complexity of networks comes from their intricate topological structures. Therefore, the study of network structure, especially local structure, underlies a number of representative and analytical applications such as representation learning of graphs \cite{hamilton2017representation,grover2016node2vec}, node-type classification \cite{bhagat2011node, kipf2016semi}, link prediction \cite{gao2015link, kovacs2019network} and anomaly detection \cite{noble2003graph, akoglu2015graph}.

One fundamental and classic statistical metric to assess the local structure of complex networks is the \textit{local clustering coefficient} \cite{watts1998collective, fagiolo2007clustering}. It is defined as the percentage of the number of triangles formed with a focal node to the number of triangles that the focal node could form with all its neighbours. Note that the focal node here serves as the centre node in an open triad (the middle of a length-2 path). Since many of the real-world networks are triangle-rich, the clustering coefficient --- a measure of triangle formation --- has become a standard metric to describe networks. It has also been used in numerous applications such as malware detection \cite{lee2018automatic}, language learning \cite{goldstein2014influence} and structural role discovery \cite{henderson2012rolx}.

A recent study has proposed another interesting measure of triangle formation, i.e., the \textit{local closure coefficient} \cite{yin2019local}. With the focal node as the end node of an open triad (the head of a length-2 path), it is quantified as the percentage of twice the number of triangles containing the focal node to the number of all length-2 paths starting from the focal node. Specifically, the classic local clustering coefficient measures the extent to which the 1-hop neighbours of a given node connect to each other, while the local closure coefficient measures the extent to which the 2-hop neighbours of a given node connect to the given node itself. This new metric has been proven to be a useful feature in network analysis tasks such as community detection and link prediction \cite{yin2019local}.

\begin{figure}[t]
\centerline{\includegraphics[scale = 0.95]{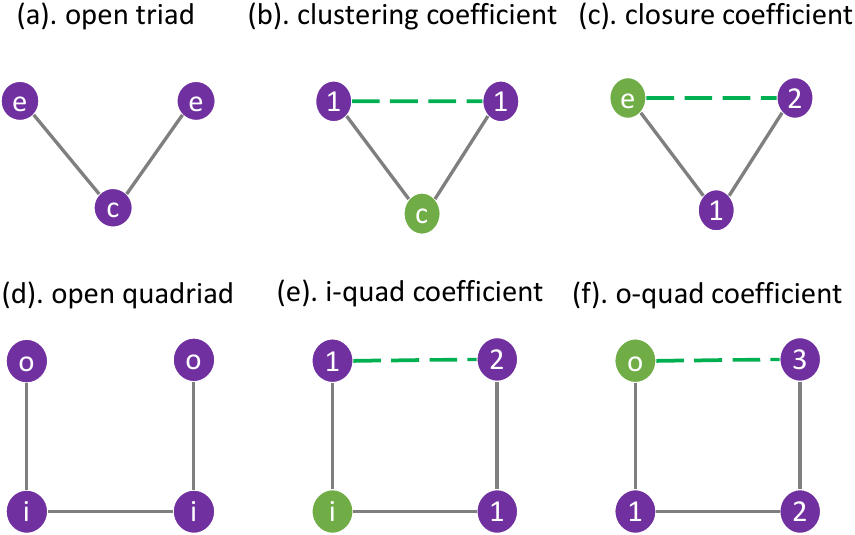}}
\caption{The i-quad coefficient and the o-quad coefficient in comparison with the clustering coefficient and the closure coefficient. Letters $c$, $e$, $i$ and $o$ denote centre node, end node, inner node and outer node respectively. Node in green colour is the focal node in each subfigure. Number on node indicates the node's distance from the focal node in the open triad or the open quadriad, which might be closed by an edge in dotted green line style.}
\label{fig:one}
\vspace{-0mm}
\end{figure}

In many types of networks, however, quadrangles appear at a much higher frequency than triangles, and thus become the most dominant motifs \cite{milo2002network}. For instance, in gene regulatory networks, logical circuits networks and neuron networks, the over-represented "bi-fan" structure (a specific directed quadrangle) serves to carry information or signals from previous units to following ones; while in food webs, the highly recurring "bi-parallel" structure (another type of directed quadrangle) describes how energy flows in an ecosystem.

In order to better describe and analyse the local structure of networks, we propose two metrics quantifying the formation of quadrangles, i.e., the \textit{i-quad coefficient} and the \textit{o-quad coefficient}. There are two definitions in that two categories of nodes --- the inner node or the outer node --- can be distinguished from the node's position in an open quadriad (also called intransitive quadriad in some works \cite{rainone2020network}). The i-quad coefficient, with the focal node functioning as the inner node of an open quadriad, measures the extent to which the focal node's 2-hop neighbours connect to its 1-hop neighbours. The o-quad coefficient, having the focal node as the outer node of an open quadriad, measures the extent to which the focal node's 3-hop neighbours connect to itself (Figure~\ref{fig:one}). 

Although the focus in this paper lies on the general unipartite networks, the proposed i-quad and o-quad coefficients provide interesting insights into bipartite networks as well. Suppose that in a recommender network where node type $x$ denotes users and node type $y$ denotes movies, an edge between $x_i$ and $y_i$ represents user $x_i$ likes movie $y_i$. Take the i-quad coefficient for instance (Figure~\ref{fig:two}a), given $x_1$, the focal user, likes movies $y_1$ and $y_2$, while $x_2$ likes $y_1$, it measures whether $x_2$ likes $y_2$. In other words, the i-quad coefficient gives the extent to which other users have a similar preference as the focal user. Likewise, for the o-quad coefficient, given $x_2$ likes $y_1$ and $y_2$, while $x_1$, the focal node, likes $y_1$, it measures whether $x_1$ likes $y_2$ (Figure~\ref{fig:two}b). That is to say, the o-quad coefficient gives the extent to which the focal user shares a similar opinion with other users. Interestingly, this explanation coincides with the idea of collaborative filtering \cite{goldberg1992using, su2009survey}.

\begin{figure}[t]
\centerline{\includegraphics[scale = 0.9]{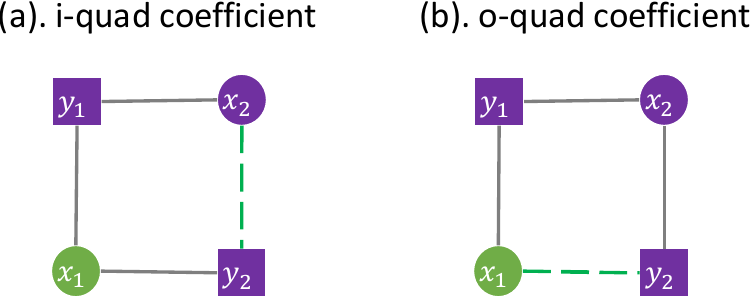}}
\caption{An example of the i-quad coefficient and the o-quad coefficient in a movie recommender network. Circle nodes represent users, and square nodes represent movies. Node $x_1$, marked in green, is the focal node. Four nodes and three solid links form an open quadriad, which if is closed by a dotted link will form a quadrangle.}
\label{fig:two}
\vspace{-0mm}
\end{figure}


In addition to the basic network structure, a deeper understanding of complex systems sometimes requires taking into account the intensity or the strength of interactions between components. This is achieved by assigning weights to links. For instance, in unipartite networks, weighted links are used to represent the frequency of contact in a communication network, or the intensity of the traffic flow in a transportation network; in bipartite networks, especially recommender networks, weights are added to indicate how much a person likes a product or how often he or she purchases it. Accordingly, we introduce the \textit{weighted i-quad coefficient} and the \textit{weighted o-quad coefficient} in order to unveil the quadrangle formation in real weighted networks.

Our empirical study on $16$ real-world networks from six domains has revealed several basic and interesting properties of the two proposed coefficients.
First, we find that in most types of networks, the average o-quad coefficient is smaller than the average i-quad coefficient, which is also demonstrated through their cumulative density distributions. Secondly, we show that the o-quad coefficient has a strong positive correlation with node degree, whereas the correlation between the i-quad coefficient and node degree is very weak. We then provide a theoretical justification of this phenomenon under the configuration model. 

Last but not least, we illustrate how the proposed quadrangle coefficients can be powerful features for network analysis and inference tasks. In a network classification task, we show that different types of real-world networks are significantly better clustered by adding the two quadrangle coefficients. Furthermore, in a link prediction task, we also show that the i-quad and o-quad coefficients can be used as effective predictors to improve the performance, especially in food webs, protein-protein interaction networks and infrastructure networks.

To sum up, in order to measure the formation of quadrangles in networks, we propose the i-quad coefficient and the o-quad coefficient, based on the inner node and the outer node of an open quadriad respectively. We further extend them to weighted networks. Through extensive experiments on real-world networks, we show not only the intrinsic properties of the two coefficients, but also investigate how they can be utilised in common network analysis task and machine learning tasks. The remainder  of this paper is organised as follows. Section~\ref{sec: background} introduces notations and background knowledge of clustering coefficient and closure coefficient. Section~\ref{sec: quadrangle_coefs} presents and exemplifies the proposed quadrangle coefficients, whereas Section~\ref{sec: evaluation} provides details of the evaluation, including the datasets, experiment setups, performance measures, experiment results and our findings. Section~\ref{sec: related_works} briefly contemplates the related works, and finally we conclude this paper in Section~\ref{sec: conclusion}.

\section{Background and Motivating Example} \label{sec: background}
This section first introduces the basic concepts such as the classic clustering coefficient and the recently proposed closure coefficient. We then illustrate how these coefficients are calculated in the case of a small-scale network that serves as an example.

\subsection{Clustering Coefficient} \label{sec:2.1}
The clustering coefficient, or more specifically the local clustering coefficient, was originally proposed in order to measure the cliquishness of a neighbourhood in networks \cite{watts1998collective}. It has since become one of the most commonly used metrics for network structure, together with such measures as degree distribution, path length, connected components, etc. 
Let $G = (V,E)$ be an undirected graph on a node set $V$ (the number of nodes is $|V|$) and an edge set $E$ (the number of edges is $m$), without self-loops and multiple edges. We denote the set of neighbours of node $i$ as $N(i)$, and thus the degree of node $i$, denoted as $d_{i}$, equals to $|N(i)|$. An open triad is a directionless length-2 path. For example, in an open triad $ijk$, where an edge connects node $i$ and $j$, and another edge connects node $j$ and $k$, we do not distinguish between path $i\rightarrow j \rightarrow k$ and path $k\rightarrow j \rightarrow i$.

For any node $i \in V$, its \textit{local clustering coefficient}, denoted $C(i)$, is defined as the number of triangles containing node $i$ (denoted $T(i)$), divided by the number of open triads with $i$ as the centre node (denoted $OTC(i)$):
\begin{equation}
C(i) =\frac{T(i)}{O T C(i)}=\frac{\frac{1}{2} \sum_{j \in N(i)}|N(i) \cap N(j)|}{\frac{1}{2} d_{i}\left(d_{i}-1\right)}.
\end{equation}
In other words, it is the fraction of open triads, where the focal node serves as the centre node, that actually form triangles. By definition, $C(i) \in [0,1]$.

In order to get a network-level measurement, the \textit{average clustering coefficient} is introduced by averaging the local clustering coefficient over all nodes (an undefined local clustering coefficient is treated as zero): 
\begin{equation} \label{eq: avg_clu}
\overline{C}=\frac{1}{|V|} \sum_{i \in V} C(i).
\end{equation}

An alternative way to measure clustering at the network-level is the \textit{global clustering coefficient} \cite{newman2001random}, which is defined as the fraction of open triads that form triangles in the entire network:
\begin{equation}
\label{eqn_gcc}
C=\frac{\sum_{i \in V} \sum_{j \in N(i)}|N(i) \cap N(j)|}{\sum_{i \in V}d_{i}\left(d_{i}-1\right)}.
\end{equation}
Note that the global clustering coefficient is not equivalent to the average clustering coefficient. In Equation~\ref{eqn_gcc}, we calculate the number of triangles in the entire network, then divided by the number of open triads across the network. Since a node with high degree forms more open triads and also tends to form more triangles, the global clustering coefficient thus puts more weight on hub nodes. On the contrary, in Equation~\ref{eq: avg_clu}, we first calculate the sum of local clustering coefficient of each node, then average over the number of nodes, which gives equal weight on each node. 

\subsection{Closure Coefficient}
Different from the ordinary centre node based perspective in the clustering coefficient, another interesting measure of triangle formation, i.e., the local closure coefficient, has recently been proposed \cite{yin2019local}. The focal node in the closure coefficient serves as the end node of an open triad. As Yin et al.\cite{yin2019local} has revealed, this subtle difference in measurement leads to very different properties from those of the clustering coefficient. 

Adopting the notations of Section~\ref{sec:2.1}, the local closure coefficient of node $i$, denoted $E(i)$, is defined as twice the number of triangles formed with $i$, divided by the number of open triads with $i$ as the end node. (denoted $OTE(i)$):
\begin{equation} \label{eqn_lcc}
    E(i)=\frac{2T(i)}{OTE(i)}=\frac{\sum_{j \in N(i)}|N(i) \cap N(j)|}{\sum_{j \in N(i)}(d_j-1)}.
\end{equation}
In other words, it is the fraction of open triads, where the focal node serves as the end node, that actually form triangles. $T(i)$ is multiplied by two for the reason that each triangle contains two open triads with $i$ as the end node. When a triangle is actually formed, the focal node can be viewed as the centre node in one open triad or as the end node in two open triads (Figure~\ref{fig:open_triad}). Obviously, $E(i) \in [0,1]$. 

\begin{figure}[t]
\centerline{\includegraphics[scale = 1]{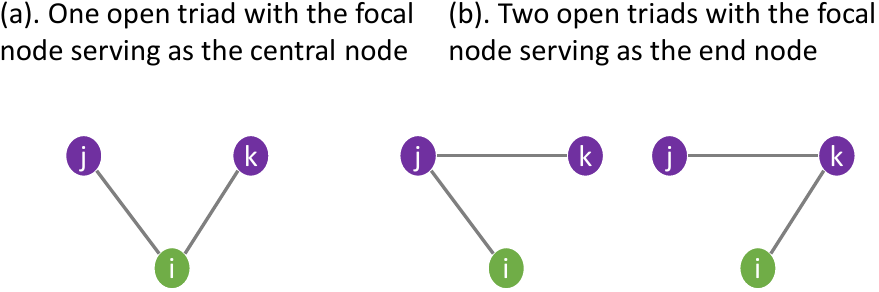}}
\caption{Two types of open triads in triangle formation. Among three nodes $i$, $j$ and $k$,  node $i$, painted in green, is the focal node.}
\label{fig:open_triad}
\vspace{-0mm}
\end{figure}

At the network-level, the \textit{average closure coefficient} is then defined as the mean of the local closure coefficient over all nodes (an undefined local closure coefficient is treated as zero):
\begin{equation}
\overline{E}=\frac{1}{|V|} \sum_{i \in V} E(i). 
\end{equation}

Analogous to the global clustering coefficient (Equation~\ref{eqn_gcc}), the \textit{global closure coefficient}, denoted $E$, is defined as:
\begin{equation}
\label{eqn_gce}
E=\frac{\sum_{i \in V} \sum_{j \in N(i)}|N(i) \cap N(j)|}{{\sum_{i \in V} \sum_{j \in N(i)}(d_j-1)}}.
\end{equation}

The global closure coefficient (Equation~\ref{eqn_gce}) is actually equivalent to the global clustering coefficient (Equation~\ref{eqn_gcc}), as globally the difference of the position of the focal node will not surface.

\subsection{A motivating example}
 
\begin{figure*}[t]
\vspace{1mm}
\centerline{\includegraphics[scale = 1]{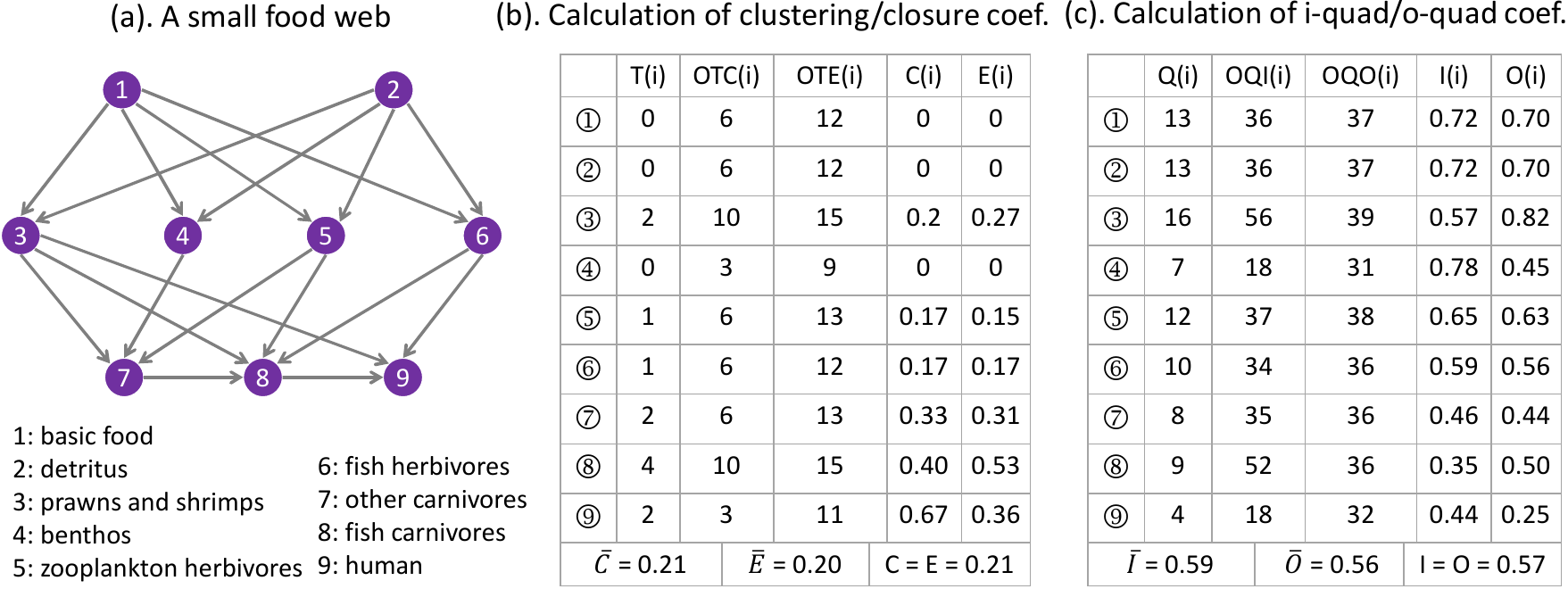}}
\caption{A motivating example.}
\label{fig:ex}
\vspace{-0mm}
\end{figure*}

We illustrate how the two coefficients of triangle formation are calculated via a small yet real network.
Figure~\ref{fig:ex}a shows a simplified food web of the backwaters of Kerala, India \cite{qasim1970some}. It is composed of 9 nodes and 18 edges. Each node represents a species and each edge represents the flow of food energy from one species to another.  

Figure~\ref{fig:ex}b gives a detailed table of the number of triangles $T(i)$, the number of centre-node-based open triads $OTC(i)$, the number of end-node-based open triads $OTE(i)$, the local clustering coefficient $C(i)$ and the local closure coefficient $E(i)$ for each node. Also, the last row gives the average clustering coefficient, the average closure coefficient and the global clustering/closure coefficient, all of which are around $0.20$.

Different from some triangle-rich networks, we find many more quadrangles than triangles in this food web (23 versus 4), which motivates us to propose measuring quadrangle formation instead. In the next section, new measures to quantify information about quadrangles in complex networks are proposed, and we show how we can leverage the fact that some networks are quadrangle and not triangle rich.
\section{Two Quadrangle Coefficients} \label{sec: quadrangle_coefs}
The clustering coefficient and the closure coefficient provide us two ways of measuring triangle formation. In some networks however, we care more about the formation of quadrangles. Also, triangles do not exist in bipartite networks and the most basic enclosed structure in this representation of networks is quadrangle. In this section, we first propose two coefficients measuring quadrangle formation, based on two different positions of the focal node in an open quadriad. Then, we further extend them to weighted networks. 

\subsection{I-quad coefficient}
Recall that an open quadriad is a directionless length-3 path (Figure~\ref{fig:one}d). In an open quadriad $ijkl$, for instance, where three edges exist between node pairs $(i, j)$, $(j, k)$ and $(k, l)$, we name nodes $j$ and $k$ as inner nodes. In contrast, nodes $i$ and $l$ are outer nodes. Obviously, an inner node has a degree of two, and an outer node has a degree of one. Further, an open quadriad with the focal node acting as the inner node is called inner-node-based open quadriad of that node; an open quadriad with the focal node acting as the outer node is named outer-node-based open quadriad of that node.

In comparison to the definition of clustering coefficient in measuring triangle formation, we propose the i-quad coefficient for measuring quadrangle formation. It is quantified as the fraction of inner-node-based open quadriads that actually form quadrangles. Concretely, the  \textbf{\textit{i-quad coefficient}} of node $i$, denoted $I(i)$, is defined as twice the number of quadrangles formed with $i$ (denoted as $Q(i)$), divided by the number of open quadriads with $i$ as the inner node (denoted as $OQI(i)$):
\begin{equation} \label{eqn: i-quad}
\begin{split}
I (i) & 
= \frac{2 Q(i)}{OQI(i)} \\
& =  \frac{\sum_{j \in N(i)} \sum_{k \in(N(j)-i)}|N(k) \cap N(i)-j|}{\sum_{j \in N(i)} \sum_{k \in (N(j)-i)}|N(i)-j-k|}.
\end{split}
\end{equation}
In the above equation, $j$ is in $i$'s neighbour set, and $k$ is in $j$'s neighbour set excluding $i$. $Q(i)$ is multiplied by two because each quadrangle can be viewed as constructed from two open quadriads with $i$ as the inner node. By definition, it is obvious that $I(i) \in [0,1]$.

Then, we define the \textbf{\textit{average i-quad coefficient}} at the network-level, as the mean of the i-quad coefficient over all nodes (undefined ones are treated as zeros):
\begin{equation}
\overline{I}=\frac{1}{|V|} \sum_{i \in V} I(i). 
\end{equation}
In the case of a random network where each pair of nodes is connected with a probability $p$, the expected value of the average i-quad coefficient is also $p$, i.e., $\mathop{\mathbb{E}}[\overline{I}] = p$.

An alternative way of measuring quadrangle formation at the network-level is the \textbf{\textit{global i-quad coefficient}}, which is defined as the fraction of inner-node-based open quadriads that form quadrangles in the entire network:
\begin{equation}
\label{eqn_giq}
I = \frac{\sum_{i \in V} \sum_{j \in N(i)} \sum_{k \in(N(j)-i)}|N(k) \cap N(i)-j|}{{\sum_{i \in V} \sum_{j \in N(i)} \sum_{k \in (N(j)-i)}|N(i)-j-k|}}.
\end{equation}
The numerator of the above equation can be viewed as eight times the number of quadrangles in the entire network (each node of a quadrangle contributes two counts), then divided by twice the number of open quadriads with each node acting as the inner node.

Although both the average i-quad coefficient and the global i-quad coefficient can be used as metrics to describe quadrangle formation in the entire network, they are calculated differently. The average i-quad coefficient adds up the i-quad coefficient of every node then divides it by the number of nodes, giving each node equal weight. In contrast, the global i-quad coefficient gives nodes that form numerous quadrangles more weight, by first totalling the numerator of the i-quad coefficient then dividing it by the sum of the denominator of the i-quad coefficient.

\subsection{O-quad coefficient}
\begin{figure}[t]
\centerline{\includegraphics[scale = 0.95]{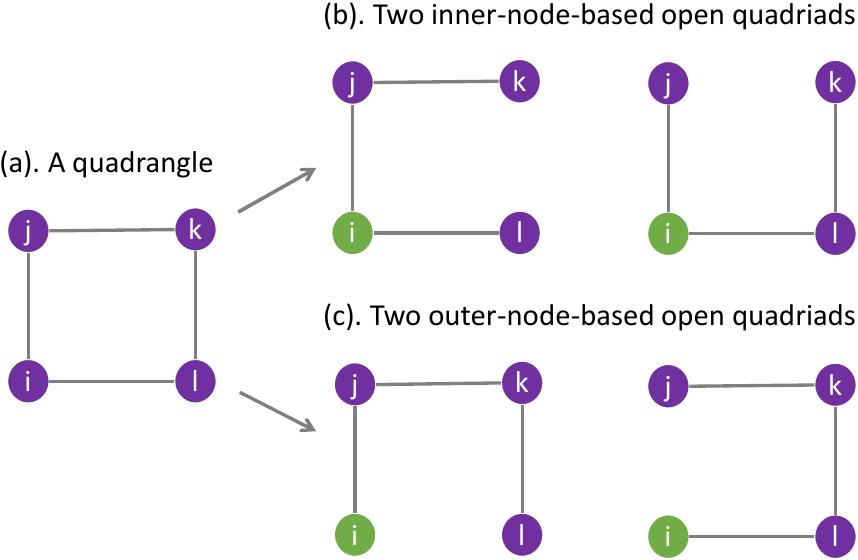}}
\caption{Two types of open quadriads in a quadrangle. Node $i$, depicted in green, is the focal node, among four nodes $i$, $j$, $k$ and $l$.}
\label{fig:four}
\vspace{-0mm}
\end{figure}

Inspired by the closure coefficient in measuring triangle formation, we move the focal node from the inner node to the outer node of an open quadriad, thus proposing the o-quad coefficient in order to measure the formation of quadrangle from a different perspective. 

The significance of introducing the o-quad coefficient is twofold. 
First, the o-quad coefficient takes into account length-$3$ paths emanating from the focal node, and therefore has a larger scope of the network structure.
Second, when a quadrangle is formed, the closing edge (the edge that closes the outer-node-based open quadriad) is incident to the focal node. This leads to some special properties, comparing to the i-quad coefficient where the closing edge is not incident to the focal node. We show in Section~\ref{sec: evaluation} that the cumulative distribution curve of the o-quad coefficient is above that of the i-quad coefficient, and that the o-quad coefficient tends to increase with node degree. 

In a similar way, the \textbf{\textit{o-quad coefficient}} of node $i$, denoted as $O(i)$,  is defined as the fraction of open quadriads with $i$ as the outer node that are closed:
\begin{equation} \label{eqn: o-quad}
\begin{split}
O (i) & 
= \frac{2 Q(i)}{OQO(i)} \\
& = \frac{\sum_{j \in N(i)} \sum_{k \in(N(j)-i)}|N(k) \cap N(i)-j|}{\sum_{j \in N(i)} \sum_{k \in (N(j)-i)}|N(k)-j-i|},
\end{split}
\end{equation}
where $OQO(i)$ is the number of outer-node-based open quadriads of node $i$, and $Q(i)$ is the number of quadrangles containing $i$. $Q(i)$ is multiplied by two because each quadrangle contains two open quadriads with $i$ as the outer node. In a quadrangle, the focal node can serve as the inner node in two open quadriads or as the outer node in another two open quadriads (Figure~\ref{fig:four}). Obviously, $O(i) \in [0,1]$.

In order to measure at the network level, the \textbf{\textit{average o-quad coefficient}} is defined by averaging the o-quad coefficient over all nodes (an undefined o-quad coefficient is treated as zero):
\begin{equation}
\overline{O}=\frac{1}{|V|} \sum_{i \in V} O(i). 
\end{equation}

Analogous to the global i-quad coefficient, the \textbf{\textit{global o-quad coefficient}} can be defined as the fraction of outer-node-based open quadriads that form quadrangles in the entire network:
\begin{equation}
\label{eqn_goq}
O = \frac{\sum_{i \in V} \sum_{j \in N(i)} \sum_{k \in(N(j)-i)}|N(k) \cap N(i)-j|}{{\sum_{i \in V} \sum_{j \in N(i)} \sum_{k \in (N(j)-i)}|N(k)-j-i|}}.
\end{equation}
As the equivalence between the global clustering coefficient and the global closure coefficient, this definition of global o-quad coefficient is actually not different from the global i-quad coefficient (Equation~\ref{eqn_giq}) since globally the difference of the position of the focal node will not arise.

Revisiting the motivating example, Figure~\ref{fig:ex}c gives a detailed table of the number of quadrangles $Q(i)$, the number of inner-node-based open quadriads $OQI(i)$ and the number of outer-node-based open quadriads $OQO(i)$ of each node, based on which the i-quad coefficient $I(i)$ and the o-quad coefficient $O(i)$ are calculated. Also, the last row of this table gives the three network-level measures, i.e., the average i-quad coefficient, the average o-quad coefficient and the global i-quad/o-quad coefficient, which are more than $2.5$ times larger than those metrics measuring triangles formation.

\subsection{Quadrangle coefficients in weighted networks}

Until now, the discussion has been focused on binary networks, where the value of each link is either one or zero. In many networks, however, we need a more accurate representation of the relationships between nodes, such as the frequency of contact in a communication network, or the rating of a product given by a consumer in a recommender network, etc. This kind of information is usually expressed as a strength of the relationship and we use weighted networks to represent it. Therefore, we are interested in extending the two quadrangle coefficients to networks that allow for weights of the relationships.

Several versions of weighted clustering coefficient have been proposed in order to measure triangle formation in weighted networks \cite{barrat2004architecture, onnela2005intensity, zhang2005general, saramaki2007generalizations}. For example, Onnela et al. \cite{onnela2005intensity} proposed to sum over the geometric averages of the three weights in formed triangles, divided by the number of potential triangles. Alternatively, Zhang and Horvath. \cite{zhang2005general} chose to sum simply over the products of the three weights in formed triangles, divided by the total of products of the two weights of all open triads, implying the triadic closing edges taking the maximum weight. 

Adopting a strategy similar to the one proposed by Zhang and Horvath \cite{zhang2005general}, we introduce the weighted i-quad coefficient and the weighted o-quad coefficient to measure quadrangles formation in weighted networks. Let $G^\mathcal{W} = (V, E)$ be a weighted graph without self-loops and multiple edges. The weight of a link between any node $i$ and $j$ is denoted $w_{i j}$ ($w_{i j} \in [0,1]$ after normalisation by the maximum weight). For any node $i \in V$, the \textit{\textbf{weighted i-quad coefficient}}, denoted as $I^\mathcal{W}(i)$, and the \textbf{\textit{weighted o-quad coefficient}}, denoted as $O^\mathcal{W}(i)$, are defined as:

\begin{equation}
I^\mathcal{W}(i) = \frac{\sum\limits_{j \in N(i)} \sum\limits_{k \in(N(j)-i)} \sum\limits_{l \in(N(i) \cap N(k)-j)} w_{i j} w_{j k} w_{i l} w_{l k}}{\sum\limits_{j \in N(i)} \sum\limits_{k \in(N(j)-i)} \sum\limits_{l \in(N(i)-j-k)} w_{i j}  w_{j k} w_{i l}}, 
\end{equation}

\begin{equation}
O^\mathcal{W}(i) = \frac{\sum\limits_{j \in N(i)} \sum\limits_{k \in(N(j)-i)} \sum\limits_{l \in(N(i) \cap N(k)-j)} w_{i j} w_{j k} w_{i l} w_{l k}}{\sum\limits_{j \in N(i)} \sum\limits_{k \in(N(j)-i)} \sum\limits_{l \in(N(k)-j-i)} w_{i j}  w_{j k} w_{k l}}. 
\end{equation}

When the graph becomes binary (unweighted), i.e., $w_{i j} = 1$, the above two weighted quadrangle coefficients degrade to their unweighted versions (Equation~\ref{eqn: i-quad} and Equation~\ref{eqn: o-quad}). The average weighted i-quad coefficient and the average weighted o-quad coefficient are then defined respectively as: $\overline{I^\mathcal{W}}=\frac{1}{|V|} \sum_{i \in V} I^\mathcal{W}(i)$, $\overline{O^\mathcal{W}}=\frac{1}{|V|} \sum_{i \in V} O^\mathcal{W}(i)$.

We can see from Figure~\ref{fig:weighted_quad} that in different weighted networks, the correlation of i-quad coefficient and weighted i-quad coefficient (and the correlation of o-quad coefficient and weighted o-quad coefficient) is also different. In other words, when weights are considered in calculating quadrangle coefficients, the weighted i-quad coefficient and the weighted o-quad coefficient capture different information compared to their unweighted counterparts. 

\begin{figure}[t]
\centerline{\includegraphics[scale = 0.24]{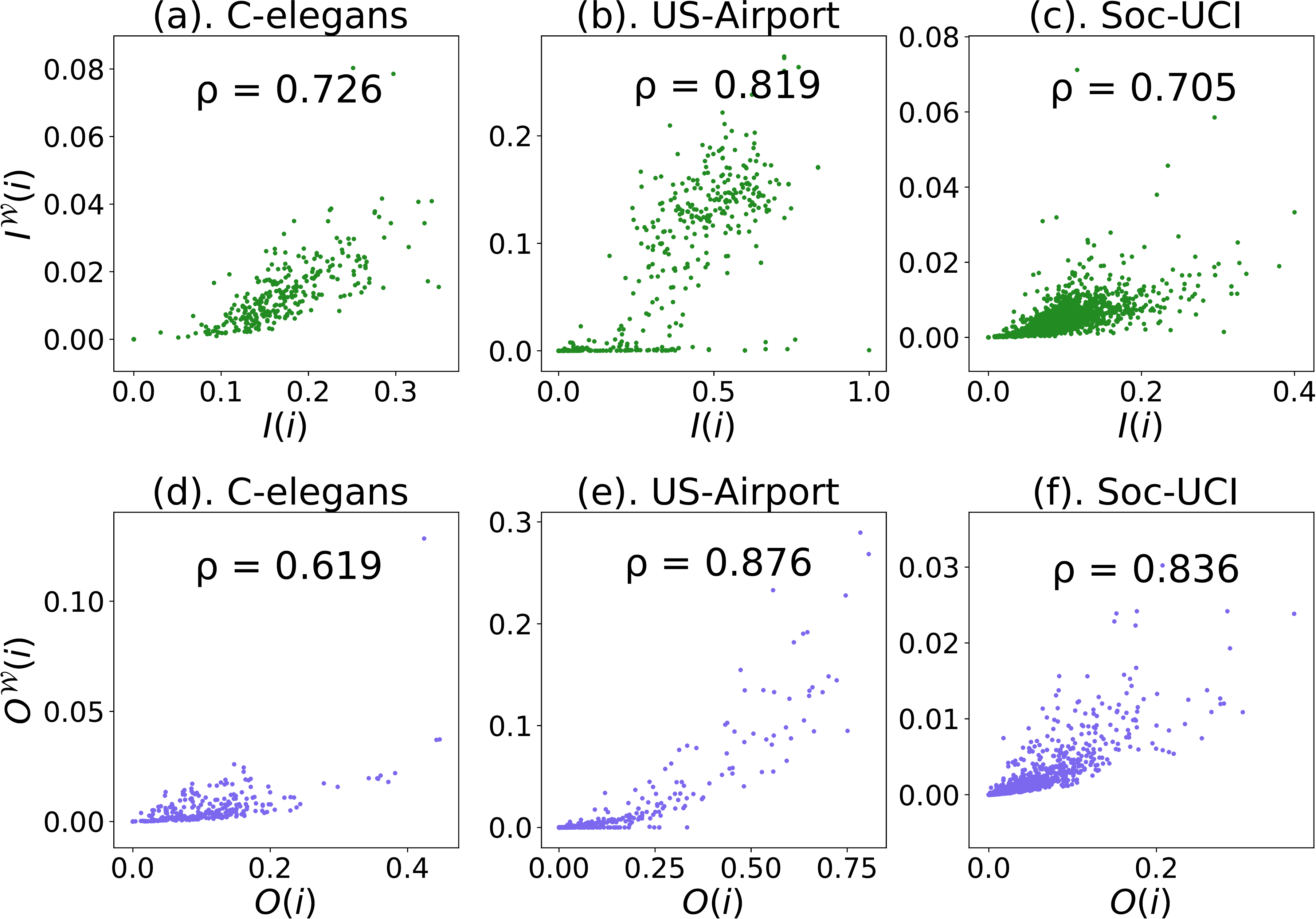}}
\caption{Correlation of quadrangle coefficients and weighted quadrangle coefficients in three different networks. First row is the correlation of i-quad coefficient $I(i)$ and weighted i-quad coefficient $I^\mathcal{W}(i)$, second row is the correlation of o-quad coefficient $O(i)$ and weighted o-quad coefficient $O^\mathcal{W}(i)$. The weighted networks are: (1) the neural network of the Caenorhabditis elegans worm \cite{watts1998collective}; (2) the network of the $500$ busiest commercial airports in the United States\cite{colizza2007reaction}; (3) the social network of online community for students at University of California, Irvine\cite{opsahl2009clustering}.}
\label{fig:weighted_quad}
\vspace{-0mm}
\end{figure}

\subsection{Computational cost}
At the end of this section, we give a brief discussion about the computational efficiency of the above mentioned metrics. From Equation~\ref{eqn: i-quad} and Equation~\ref{eqn: o-quad}, we can see that to compute the i-quad coefficient or the o-quad coefficient for a single node, the cost is $O({\langle k\rangle}^3)$, where $\langle k\rangle$ is the average degree of the network. Therefore, the cost for computing the two coefficients for every node in a network is $O({|V| \cdot \langle k\rangle}^3)$. This might seem expensive. Fortunately, in most real-world networks, $\langle k\rangle$ is small, and therefore the computation of these proposed metrics is relatively fast in large networks.

\begin{table*}[th]
\vspace{0mm}
\centering
\caption{Statistics of datasets, showing the number of nodes ($|V|$), the number of edges ($|E|$), the average degree ($\langle k\rangle$), the average clustering coefficient ($\overline{C}$), the average closure coefficient ($\overline{E}$), the average i-quad coefficient ($\overline{I}$) and the average o-quad coefficient ($\overline{O}$). In order to facilitate comparison, the last four columns give the quotient of $\overline{C}$ and $\overline{E}$, the quotient of $\overline{I}$ and $\overline{O}$, the quotient of $\overline{I}$ and $\overline{C}$, and the quotient of $\overline{O}$ and $\overline{E}$ respectively. Datasets having timestamps on edge creation are superscripted by ($\tau$).}
\label{tab:dataset}
\setlength{\tabcolsep}{9pt} 
\def\arraystretch{1.15}
\begin{tabular}{lrrrcccc@{\hskip 10pt}?@{\hskip 10pt}cccc}
\toprule
Network      & $|V|$   & $|E|$  & $\langle k\rangle$  & $\overline{C}$  & $\overline{E}$ &  $\overline{I}$  & $\overline{O}$ & $\overline{C}/\overline{E}$ & $\overline{I}/\overline{O}$ & $\overline{I}/\overline{C}$ &  $\overline{O}/\overline{E}$ \Bstrut\\
\hline
\textsc{FW-FloridaDry}   & 128 & 2,106 & 32.91 & 0.335 & 0.261 & 0.428 & 0.353 & 1.280 & 1.213 & 1.280  & 1.351   \Tstrut\\
\textsc{FW-LittleRock}   & 183 & 2,452  & 26.80 & 0.323 & 0.208 & 0.550 & 0.339 & 1.553 & 1.622 & 1.704  & 1.631 \Bstrut\\
\hline
\textsc{Soc-EmailEu$^{\tau}$}  & 986  & 16,064  & 32.58 & 0.407 & 0.153 & 0.231 & 0.102 & 2.659 & 2.267 & 0.568  & 0.667 \Tstrut\\
\textsc{Soc-ClgMsg$^{\tau}$}  & 1,899  & 13,838 & 14.57 & 0.109 & 0.022 & 0.081 & 0.029 & 5.082 & 2.806 & 0.744 & 1.347  \\
\textsc{Soc-BTCAlpha$^{\tau}$}  & 3,783  & 14,124 & 7.47 & 0.177 & 0.020 & 0.058 & 0.013 & 8.937 & 4.448 & 0.326 & 0.655   \\ 
\textsc{Soc-TwitchFr}  & 6,549  & 113K & 34.41 & 0.222 & 0.029 & 0.109 & 0.034  & 7.557 & 3.202 & 0.493 & 1.163  \Bstrut \\
\hline
\textsc{PPI-Stelzl}   & 1,706 & 3,191 & 3.74 & 0.006 & 0.002 & 0.038 & 0.021 & 3.827 & 1.806 & 6.332 & 13.416 \Tstrut\\
\textsc{PPI-Figeys}   & 2,239 & 6,432 & 5.75 & 0.040 & 0.005 & 0.082 & 0.043 & 7.321 & 1.908 & 2.064 & 7.918\\
\textsc{PPI-Vidal}   & 3,133 & 6,726  & 4.29 & 0.064 & 0.025 & 0.040 & 0.018 & 2.531 & 2.291 & 0.632 & 0.698 \\
\textsc{PPI-IntAct}   & 8077 & 26,085  & 6.46 & 0.083 & 0.016 & 0.063 & 0.021 & 5.101 & 2.993 & 0.750 & 1.278 \Bstrut\\
\hline
\textsc{Cit-DBLP$^{\tau}$}   & 12,590 & 49,651 & 7.89 & 0.117 & 0.026 & 0.060 & 0.014 & 4.529 & 4.175 & 0.510 & 0.553 \Tstrut\\
\textsc{Cit-Cora}   & 23,166 & 89,157 & 7.70 & 0.266 & 0.100 & 0.107 & 0.047 & 2.667 & 2.285 & 0.402 & 0.469 \Bstrut\\
\hline
\textsc{Rd-NewYork}   & 264K & 365K & 2.76  & 0.021 & 0.021 & 0.068 & 0.069 & 1.012 & 0.990 & 3.291 & 3.365 \Tstrut \\
\textsc{Rd-BayArea}   & 321K & 397K & 2.47 & 0.017 & 0.016 & 0.038 & 0.038 & 1.020 & 0.992 & 2.284 & 2.350  \Bstrut\\
\hline
\textsc{QA-MathOvfl.$^{\tau}$} & 21,688  & 88,956 & 8.20 & 0.094 & 0.005 & 0.031 & 0.004 & 17.956 & 7.305 & 0.333 & 0.817 \Tstrut\\
\textsc{QA-AskUbuntu$^{\tau}$} & 138K  & 262K  & 3.81 & 0.015 & 5e-4 & 0.004 & 5e-4 & 31.708 & 7.867 & 0.243  & 0.981\\
\bottomrule
\end{tabular}
\vspace{-0mm}
\end{table*}

\section{Experiments and Analysis} \label{sec: evaluation}

In this section, we analyse the proposed quadrangle coefficients on different types of real-world networks and demonstrate their usage in some common applications\footnote{Our code is available at \url{https://github.com/MingshanJia/explore-local-structure}.}. 

\subsection{Quadrangle coefficients in real-world networks}
\textbf{\textit{Datasets}.} We run experiments on 16 networks of six categories: \begin{enumerate}
    \item Food webs. \textsc{FloridaDry}\cite{ulanowicz1999network, kunegis2013konect} and \textsc{LittleRock} \cite{martinez1991artifacts}: energy transfer relationships collected from the cypress wetlands of South Florida and the Little Rock Lake of Wisconsin. Nodes represent species and an edge denotes that one species feeds on another (edge direction and weight are ignored).
    \item Social networks. \textsc{EmailEu}\cite{paranjape2017motifs, leskovec2016snap}: a temporal email network from a European research institution (a temporal edge denotes that an email is exchanged between two persons at a certain time); \textsc{ClgMsg}\cite{panzarasa2009patterns}: temporal online message interactions between UCIrvine college students (a temporal edge means that a message is exchanged between two students at a certain time); \textsc{BTCAlpha} \cite{kumar2018rev2}: a temporal who-trusts-whom network of users on a Bitcoin trading platform Bitcoin Alpha (edge direction and weight are ignored); \textsc{TwitchFr} \cite{rozemberczki2019multi}: a network of gamers who stream in French, where nodes are the users and edges are mutual friendships between them.
    \item Protein-protein interaction networks. \textsc{Stelzl}\cite{stelzl2005human}, \textsc{Figeys}\cite{ewing2007large}, \textsc{Vidal}\cite{rual2005towards} and \textsc{IntAct}\cite{orchard2014mintact}: four networks of interactions between proteins in Homo sapiens. Nodes represent proteins and an edge denotes the physical contact between two proteins in the cell.
    \item Citation networks. \textsc{DBLP}\cite{ley2002dblp} and \textsc{Cora}\cite{vsubelj2013model}: two academic publication citation networks. \textsc{DBLP} contains temporal information on edges. Nodes represent papers, and an edge means that one paper cites another paper (direction is ignored). 
    \item Infrastructure networks. \textsc{Rd-NewYork} and \textsc{Rd-BayArea}\cite{kunegis2013konect}: two road networks for New York City and San Francisco Bay Area. Nodes represent intersections and endpoints, and the roads connecting them are represented by edges.
    \item Q$\&$A networks. \textsc{MathOvfl.} and \textsc{AskUbuntu}\cite{paranjape2017motifs}: two temporal Q$\&$A networks derived from Stack Exchange. Nodes represent users, and a temporal edge means that one user answers another user's question at a certain time (edge direction is ignored). 
\end{enumerate}

\noindent\textbf{\textit{Observations}.} 
Table~\ref{tab:dataset} lists some key statistics including the proposed coefficients of these networks. We observe that in most types of networks (except road networks), the average o-quad coefficient is smaller than the average i-quad coefficient. That is to say, for the majority of nodes in these types of networks, fewer quadrangles are built from the outer-node-based open quadriads, compared to the number of quadrangles constructed from the inner-node-based open quadriads. This phenomenon is better revealed through the cumulative distribution function (Figure~\ref{fig:cdf}): the CDF curve of the o-quad coefficient is above that of the i-quad coefficient when the coefficient value is small (except in \textsc{Rd-NewYork}). 

We can also observe that in all food webs, two PPI networks (\textsc{PPI-Stelzl} and \textsc{PPI-Figeys}) and all road networks, the average i-quad coefficient is larger than the average clustering coefficient ($\overline{I}>\overline{C}$); and the average o-quad coefficient is larger than the average closure coefficient ($\overline{O}>\overline{E}$). In other words, these networks are more inclined to form quadrangles than to form triangles, which leads us to the following experiments.

\begin{figure}[t]
\centerline{\includegraphics[scale = 0.47]{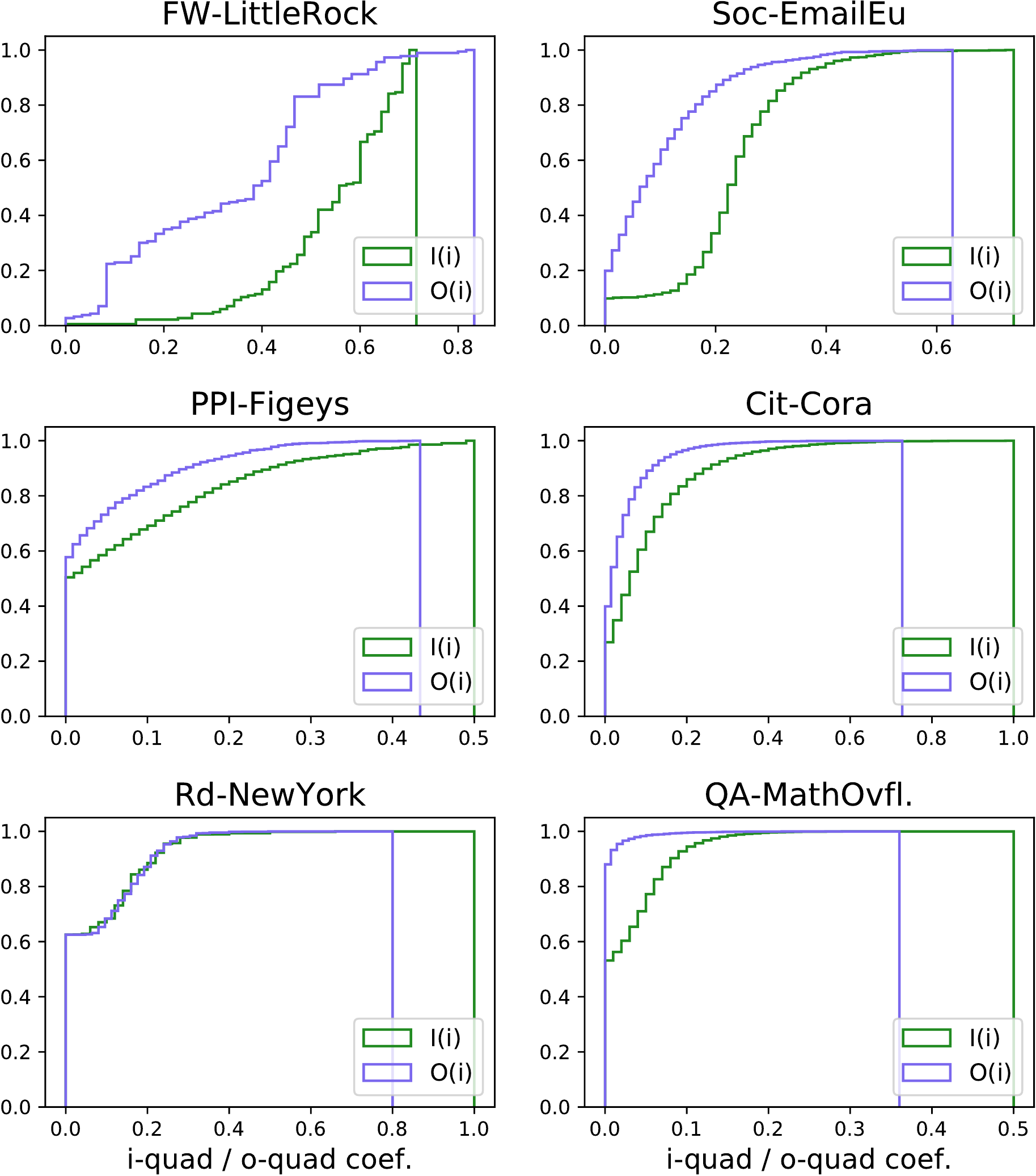}}
\caption{Cumulative distribution curve of the i-quad coefficient $I(i)$ (in green colour) and the o-quad coefficient $O(i)$ (in purple colour) in six real-world networks of different types.}
\label{fig:cdf}
\vspace{-0mm}
\end{figure}

\subsection{Correlation with node degree}

Since node degree is one of the most important and widely used concepts in network science, we study how the two quadrangle coefficients vary with it. We start by conducting an empirical analysis in real networks, followed by a theoretical justification under the degree-preserving random graph model.

We choose one network in each category and plot the correlation of quadrangle coefficients and degree (Figure~\ref{fig:corr_degree}). We observe a strong positive correlation between the o-quad coefficient and the node degree: 
the average o-quad coefficient is small among nodes with small degree and becomes larger as the average node degree increases.
In contrast, the correlation between the i-quad coefficient and the degree is weak: the average i-quad coefficient is large (compared to the average o-quad coefficient) when the average node degree is small and does not change too much as the average degree increases. Since most real-world networks are scale-free and exhibit heavy-tailed degree distribution, it also explains why the average i-quad coefficient is bigger than the average o-quad coefficient in most networks studied in our work (Table~\ref{tab:dataset}).

To better understand the correlation between the quadrangle coefficients and the node degree, we give a theoretical explanation under the configuration model \cite{fosdick2018configuring}. Constrained by a given degree sequence, the configuration model generates a network by placing edges between nodes uniformly at random. This can be achieved through a stub-matching process, in which the probability of forming an edge between node $i$ and node $j$ equals $ d_i \cdot d_j / 2m$ (assuming $d_{i}^{2} \leqslant 2 m$ for all $i$). Now we give the following proposition.

\begin{figure}[t]
\centerline{\includegraphics[scale = 0.47]{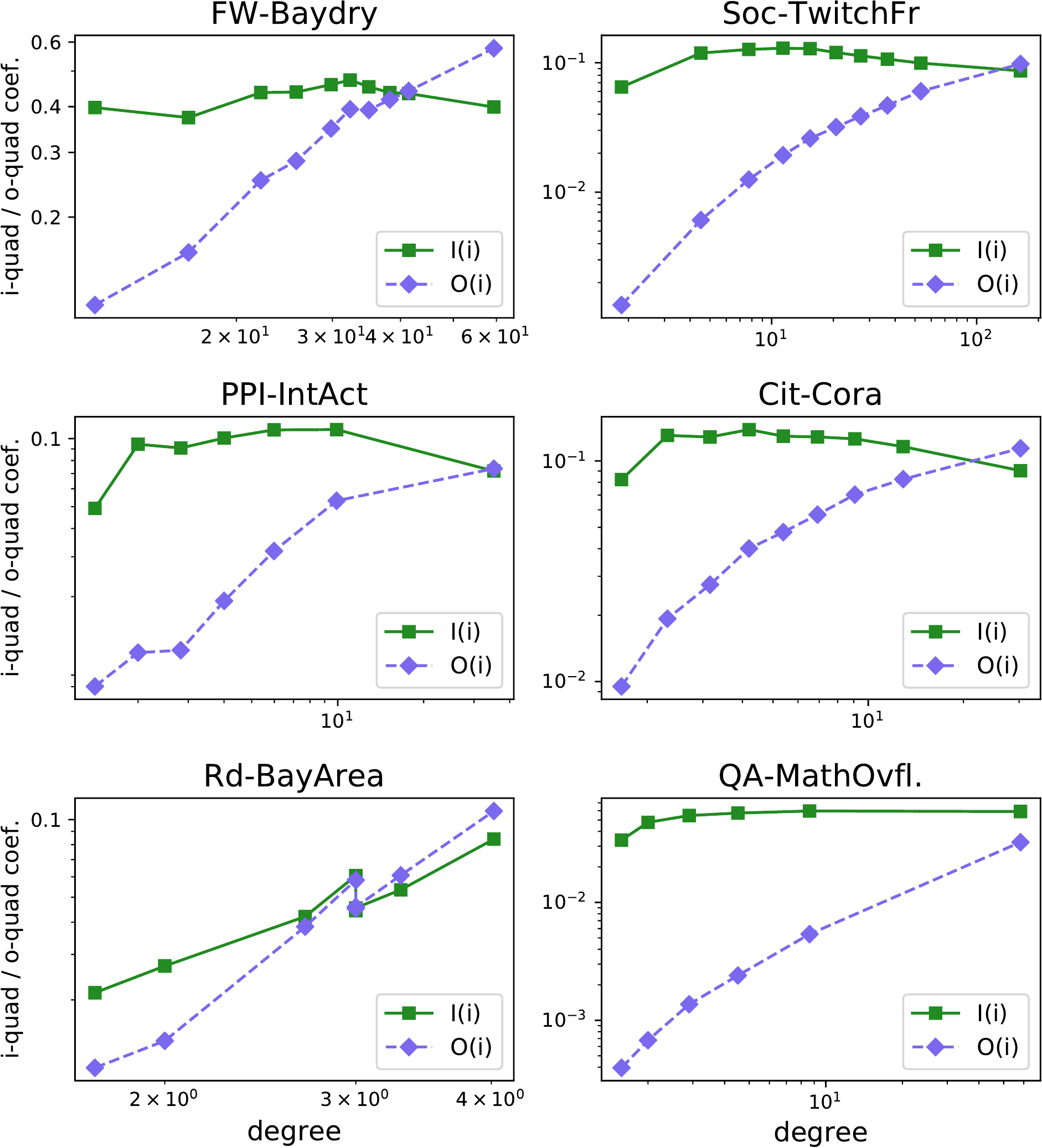}}
\caption{Correlation of two quadrangle coefficients with node degree in six real-world networks. Nodes are grouped into logarithmic bins in ascending order by degree, then average i-quad and o-quad coefficients are calculated in each bin.}
\label{fig:corr_degree}
\vspace{-0mm}
\end{figure}

\begin{prop} \label{prop}
Let $V$ be a set of $n$ nodes with specific degrees $d_1, d_2, ..., d_n$, on which graph $G$ is generated from the configuration model. Let $m=\frac{1}{2} \sum_{i=1}^{n} d_{i}$ denote the number of edges and $\bar{k}=(\sum_{i} d_{i}^{2}) /(\sum_{i} d_{i})$ be the expected degree when a node is chosen with probability proportional to its degree. As $n \rightarrow \infty$, for any node $i \in V$, its local i-quad coefficient satisfies:
\begin{equation*}
\mathbb{E}[I(i)]=\frac{(\bar{k}-1)^2}{2 m},
\end{equation*}
and its local o-quad coefficient satisfies:
\begin{equation*}
\mathbb{E}[O(i)]=\frac{(d_{i}-1) \cdot (\bar{k}-1)}{2 m} .
\end{equation*}
\end{prop}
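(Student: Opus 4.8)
The plan is to treat the numerator and the two denominators of $I(i)$ and $O(i)$ as counts of labelled subgraphs anchored at the focal node $i$, to compute the expectation of each count under the stub-matching process, and then to approximate the expectation of the ratio by the ratio of the expectations as $n\to\infty$. Observe first that the common numerator $\sum_{j \in N(i)}\sum_{k \in N(j)-i}|N(k)\cap N(i)-j|$ counts ordered $4$-cycles $i-j-k-l-i$ through $i$, so as a subgraph every one of the four nodes has degree two. The denominator $OQI(i)$ counts the configurations in which edges $(i,j),(j,k),(i,l)$ are present (the inner-node-based open quadriad $l-i-j-k$), where $i$ and $j$ have subgraph-degree two while $k$ and $l$ are leaves; the denominator $OQO(i)$ counts the length-$3$ paths $i-j-k-l$ (the outer-node-based open quadriad), where $j$ and $k$ have subgraph-degree two while $i$ and $l$ are leaves.

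The key tool is the standard configuration-model estimate that a given simple subgraph $H$ on prescribed vertices, in which vertex $v$ has degree $\delta_v$, is present with probability $\prod_v (d_v)_{\delta_v}/(2m)^{|E(H)|}$ as $n\to\infty$, where $(d_v)_{\delta_v}=d_v(d_v-1)\cdots(d_v-\delta_v+1)$ is the falling factorial. This is where the $-1$ corrections in the statement originate: a vertex internal to a path or cycle contributes the factor $(d_v)_2=d_v(d_v-1)$ because its two incident edges must attach to distinct stubs, whereas a leaf contributes only $d_v$. Summing a degree-two vertex over the whole node set gives $\sum_a d_a(d_a-1)=S_2-S_1=2m(\bar{k}-1)$, while summing a leaf gives $\sum_a d_a=2m$; here $S_1=\sum_a d_a=2m$ and $S_2=\sum_a d_a^2=\bar{k}\,S_1$ by the definition of $\bar{k}$.

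With these two identities the three expectations follow by factorising the sums over the free indices $j,k,l$ (the distinctness constraints among them, and the subtraction of already-used labels, affect only lower-order terms that vanish as $n\to\infty$). For the numerator one factor $d_i(d_i-1)$ is pinned at $i$ and three degree-two sums remain, giving a leading term proportional to $d_i(d_i-1)(\bar{k}-1)^3/(2m)$. For $OQI(i)$ the node $i$ again contributes $d_i(d_i-1)$, accompanied by one degree-two sum and two leaf sums, giving $d_i(d_i-1)(\bar{k}-1)$; for $OQO(i)$ the node $i$ is a leaf contributing $d_i$, accompanied by two degree-two sums and one leaf sum, giving $d_i(\bar{k}-1)^2$. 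Forming the ratios, the factors $d_i(d_i-1)$ and the powers of $(\bar{k}-1)$ cancel to leave exactly $\mathbb{E}[I(i)]=(\bar{k}-1)^2/(2m)$ and $\mathbb{E}[O(i)]=(d_i-1)(\bar{k}-1)/(2m)$.

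The step I expect to be the main obstacle is the passage from the ratio of expectations to the expectation of the ratio, i.e. justifying $\mathbb{E}[X/Y]\approx\mathbb{E}[X]/\mathbb{E}[Y]$; this requires that the numerator and the denominator each concentrate about their means as $n\to\infty$, which is the standard (and only mildly delicate) point in such configuration-model heuristics. The only other care needed is bookkeeping: retaining the falling-factorial factors rather than the naive independent-edge probabilities $d_ud_v/(2m)$ --- using the latter would wrongly replace every $\bar{k}-1$ by $\bar{k}$ --- and verifying that excluding coincidences among $i,j,k,l$ contributes only $O(1/n)$ relative corrections.
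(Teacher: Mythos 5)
Your computation is correct and arrives at the stated formulas, but it takes a genuinely different route from the paper. The paper never computes moments of the numerator and denominators as subgraph counts: it fixes one open quadriad anchored at $i$ and computes the probability that its closing edge is present under stub matching --- $(d_k-1)(d_l-1)/2m$ when $i$ is the inner node (closing edge between the two outer nodes $k$ and $l$) and $(d_i-1)(d_l-1)/2m$ when $i$ is an outer node --- and then replaces $d_k$ and $d_l$ by the size-biased expectation $\bar{k}$, shown via $\mathbb{E}[d_k]=\sum_{k}d_k^2/\sum_{k}d_k\rightarrow\bar{k}$ as $n\rightarrow\infty$. Because the local coefficient is an average of closure indicators over the open quadriads at $i$, conditioning on the quadriad structure and using linearity of expectation makes the expected fraction equal to the mean per-quadriad closure probability, so the paper never confronts the $\mathbb{E}[X/Y]$ versus $\mathbb{E}[X]/\mathbb{E}[Y]$ issue that you correctly single out as the delicate step of your route; what the paper quietly assumes instead is the asymptotic factorization $\mathbb{E}[(d_k-1)(d_l-1)]=(\bar{k}-1)^2$ and the independence of the closing edge from the quadriad's own edges. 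Your first-moment subgraph-count approach buys mechanical transparency and generality: the falling-factorial factors explain exactly where each $-1$ correction originates (the paper only remarks that one stub of each outer node is already used), your intermediate moments $\mathbb{E}[OQI(i)]\approx d_i(d_i-1)(\bar{k}-1)$, $\mathbb{E}[OQO(i)]\approx d_i(\bar{k}-1)^2$ and numerator $\approx d_i(d_i-1)(\bar{k}-1)^3/(2m)$ are consistent with the paper's answer, and the method extends to any anchored motif statistic --- at the price of the concentration argument you flag, which the paper's conditioning sidesteps. Both arguments sit at the same heuristic level of rigor (neither proves concentration or controls the coincidence terms exactly), so your proposal is an acceptable, and in some respects more systematic, alternative proof.
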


\begin{proof}
For any open quadriad with node $i$ as an inner node, we denote one outer node by $k$ and another outer node by $l$ (Figure~\ref{fig:proof}a). The probability that this open quadriad is closed equals the probability of having an edge between node $k$ and $l$, which is $\left(d_{k}-1\right)\left(d_{l}-1\right) / 2 m$ in the configuration mode. The reason of subtracting $1$ from $d_k$ and $d_l$ is that one stub of node $k$ (and node $l$) has already been used in forming the open quadriad.

Now, we show that as $n \rightarrow \infty$, $\mathbb{E}\left[d_{k}\right]=\mathbb{E}\left[d_{l}\right]=\bar{k}$. Via stub matching, any node, other than node $i$ and $j$, can form an edge with node $j$ and thus become one outer node of the open quadriad. The probability of node $k$ being this node is proportional to its degree, which is $\frac{d_{k}}{\sum_{k\in V, k \neq i, j} d_{k}}$. Therefore, we have $\mathbb{E}\left[d_{k}\right]=\sum_{k \in V, k \neq i, j} d_{k} \cdot \frac{d_{k}}{\sum_{k\in V, k \neq i, j} d_{k}}$. When $n \rightarrow \infty$, $\mathbb{E}\left[d_{k}\right]=\sum_{k \in V} d_{k} \cdot \frac{d_{k}}{\sum_{k\in V} d_{k}}=\bar{k}$. Similarly, we have $\mathbb{E}\left[d_{l}\right]=\bar{k}$.

In short, we have:
\begin{equation*}\begin{aligned}
\mathbb{E}[I(i)] &=\mathbb{E}\left[\left(d_{k}-1\right)\left(d_{l}-1\right) /(2 m)\right] \\
& = \frac{(\mathbb{E}\left[d_{k}\right]-1) \cdot (\mathbb{E}\left[d_{l}\right]-1)}{2 m} =\frac{(\bar{k}-1) ^ 2}{2 m}.
\end{aligned}\end{equation*}

Likewise, for any open quadriad with node $i$ as an outer node, we denote the other outer node by $l$ (Figure~\ref{fig:proof}b). 
And we have:
\begin{equation*}\begin{aligned}
\mathbb{E}[O(i)] &=\mathbb{E}\left[\left(d_{i}-1\right)\left(d_{l}-1\right) /(2 m)\right] \\
& = \frac{(d_{i}-1) \cdot (\mathbb{E}\left[d_{l}\right]-1)}{2 m} =\frac{(d_{i}-1) \cdot (\bar{k}-1)}{2 m}.
\end{aligned}\end{equation*}
\end{proof}

\begin{figure}[t]
\centerline{\includegraphics[scale = 0.75]{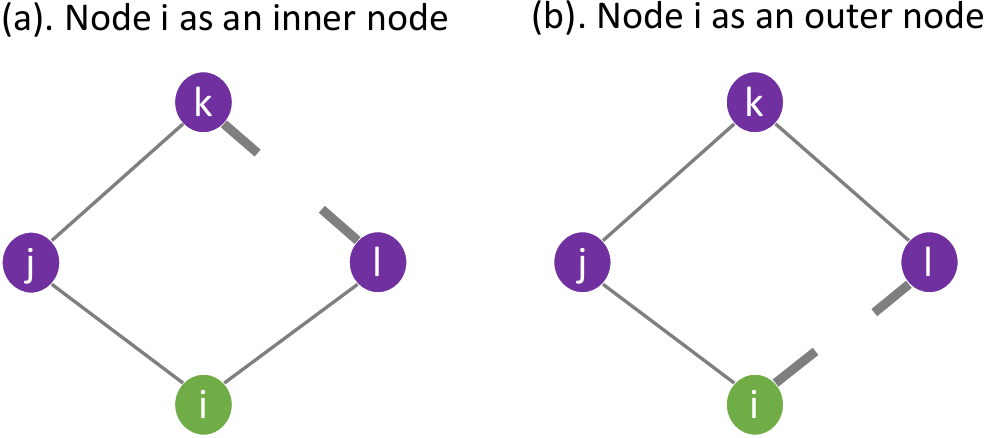}}
\caption{Two types of quadrangle formation via stub matching. (a) Quadrangle is potentially formed with the focal node $i$ acting as the inner node. The closing edge is between node $k$ and $l$. (b) Quadrangle is potentially formed with the focal node $i$ acting as the outer node. The closing edge is between node $i$ and $l$.}
\label{fig:proof}
\vspace{-0mm}
\end{figure}

Although Proposition~\ref{prop} is given under the configuration model, we see from Figure~\ref{fig:corr_degree} that this property is well preserved in most real-world networks. Only that in road networks, i.e., \textsc{Rd-NewYork} and \textsc{Rd-BayArea}, the average i-quad coefficient and the average o-quad coefficient are very similar (Table~\ref{tab:dataset}), and they exhibit similar correlations with node degree. This is because the variance of node degree is extremely small (less than one) in this type of network, resulting in $d_i$ close to $\bar{k}$, and thus $\mathbb{E}[O(i)]$ close to $\mathbb{E}[I(i)]$.

\subsection{Network classification}
In this section, we exhibit how useful the proposed quadrangle coefficients are in classifying different types of networks. Previous works have shown that normalized number of triads and triangles (triad significance profile\cite{milo2004superfamilies} and clustering signatures\cite{ahnert2008clustering}) are effective attributes in a network classification task. It motivated us to use the two quadrangle coefficients in the network classification, as they represent a normalized number of quadrangles.

We can see in Table~\ref{tab:dataset} that the quotient of the average i-quad coefficient and the average clustering coefficient ($\overline{I}/\overline{C}$), and the quotient of the average o-quad coefficient and the average closure coefficient ($\overline{O}/\overline{E}$) are contrasting in different types of networks. It is intuitive to expect the two quadrangle coefficients will be able to add useful discriminative information to a set of features, in addition to the average clustering coefficient and the average closure coefficient, for improving of the network classification accuracy.

\begin{figure*}[t]
\centerline{\includegraphics[scale = 0.42]{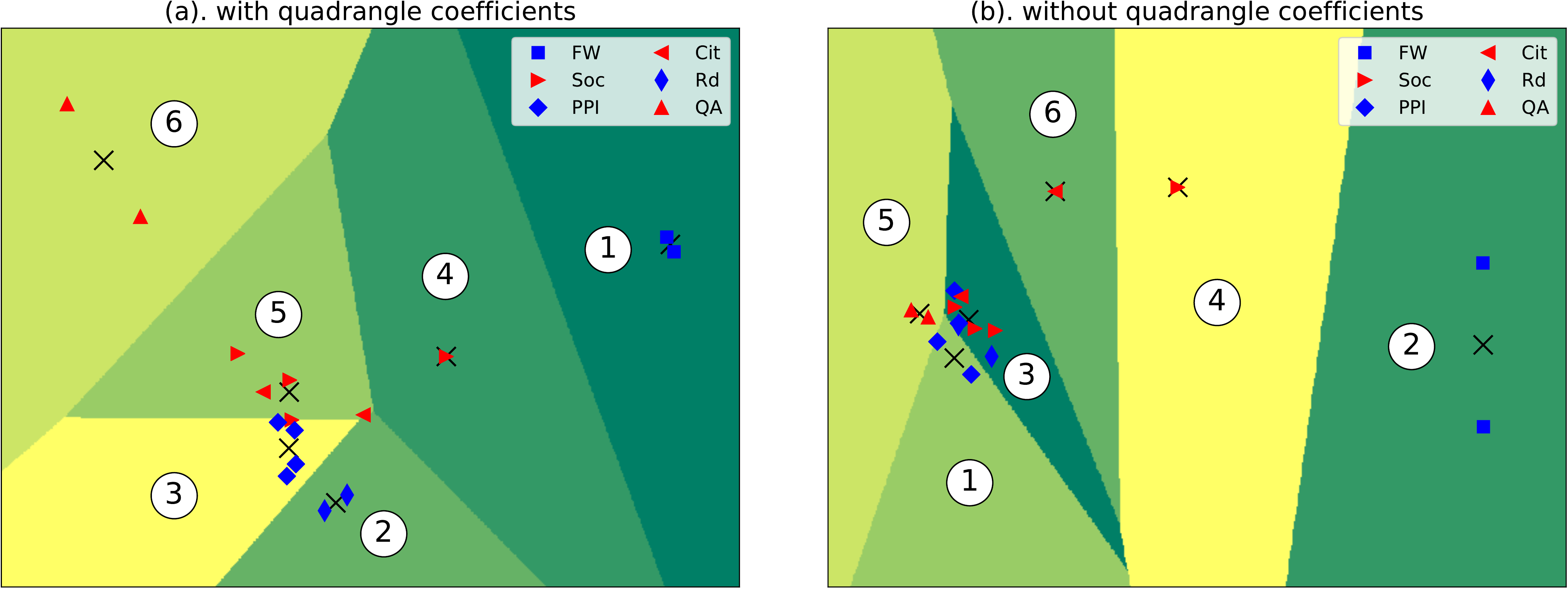}}
\caption{Two-dimensional visualisation of K-means clustering on PCA-reduced data, with and without quadrangle coefficients (left figure and right figure respectively). Six clusters are labelled from $1$ to $6$, and painted in sequential colours. Centroids of clusters are marked as black crosses. Data points are plotted in different shapes and colours representing their ground truth categories, as shown in the legend.}
\label{fig:classification}
\vspace{-0mm}
\end{figure*}

\noindent\textbf{\textit{Setup}.} 
We first prepare the data by choosing five features from the networks, i.e., the average node degree $\langle k\rangle$, the average clustering coefficient $\overline{C}$, the average closure coefficient $\overline{E}$, the average i-quad coefficient $\overline{I}$, and the average o-quad coefficient $\overline{O}$. Then we employ a K-means clustering algorithm to partition all $16$ networks of our dataset into $6$ clusters. The initial centroids are chosen randomly, and we repeat the algorithm with different set of initial centroids for $1000$ times, returning the best result in terms of V-measure score\cite{rosenberg2007v}. Maximum number of iterations for a single run is set to $300$. To compare, we use the same setting to run the experiment, but with only three features, i.e., without the two quadrangle coefficients.

\begin{table}[ht]
\setlength{\tabcolsep}{7pt} 
\centering
\caption{Homogeneity (Homo.), completeness (Compl.) and V-measure score of the K-means clustering on $16$ real-world networks, with and without the quadrangle coefficients (first row and second row respectively).}
\label{tab:network_classification}
\begin{tabular}{lccc}
\toprule
Features  & Homo. & Compl. & V-measure \\
\midrule
with quadrangle coefs.  &  0.810   &   0.879     & 0.826   \Tstrut \\
without quadrangle coefs. &  0.731  &   0.766     & 0.745  \Tstrut \\
\bottomrule
\end{tabular}
\end{table}

\noindent\textbf{\textit{Results and discussion}.}
The classification results measured in homogeneity, completeness and V-measure score are given in Table~\ref{tab:network_classification}. Homogeneity measures whether the samples of a single class belonging to a single cluster; Completeness measures whether all members of a class are assigned to the same cluster; V-measure score is the harmonic mean of the Homogeneity and Completeness. We observe significant improvement (more than $10\%$ in homogeneity and V-measure score, nearly $15\%$ in completeness score) after adding the two quadrangle coefficients. In order to better analyse the results, we adopt the Principal Component Analysis algorithm to compress the data to a two-dimensional space, and thus visualise the classification results (Figure~\ref{fig:classification}). 

We can see from Figure~\ref{fig:classification}(a) that with the two quadrangle coefficients, the labellings of food webs (cluster $1$), PPI networks (cluster $3$), road networks (cluster $2$) and QA networks (cluster $6$) are perfect. The model only cannot properly partition social networks from citation networks (cluster $5$ contains three social networks and two citation networks while cluster $4$ has only one social network --- \textsc{Soc-EmailEu}). In contrast, when the quadrangle coefficients are excluded from the model, the majority of data points are congregated at the left part of the space, resulting in worse classification result, as shown in Figure~\ref{fig:classification}(b). Only two types of networks are labelled perfectly (food webs in cluster $2$ and QA networks in cluster $5$). The remaining four types of networks are poorly clustered, especially in cluster $3$ which contains data points of all four categories. This experiment shows that adding quadrangle coefficients improves significantly the ability to tell apart different types of real-world networks, especially for these rich in quadrangles.  


\subsection{Link prediction}

As two new metrics measuring quadrangle formation, the i-quad coefficient and the o-quad coefficient provide additional topological features for a node-level network analysis and inference. As an example, we show their utilities in missing link prediction, where significant improvement is brought by adding them.

Many studies have shown that common neighbours index and its variations such as Adamic-Adar index and resource allocation index perform well in the link prediction problem \cite{liben2007link, adamic2003friends, zhou2009predicting}. Besides, the clustering coefficient and the closure coefficient are proven to be useful features to improve the performance \cite{al2006link, yin2019local}. Therefore, we use these five features as the baseline features in our prediction model, and then test the performance by adding the proposed i-quad and o-quad coefficients. XGBoost, the gradient boosted trees, is used as the prediction model due to its speed and performance. 

\noindent\textbf{\textit{Setup}.} 
We model a network as a graph $G = (V,E)$. For networks having timestamps on edges, we order the edges according to their appearing times and select the first $70\%$ edges and related nodes to form an \say{old graph}, denoted $G_{old} = (V^*, E_{old})$. For networks not having timestamps, we randomly shuffle the edges then perform the partition, and we repeat $100$ times in order to assess variance and reduce the impact of a single partition on the possible conclusions. The remaining $30\%$ edges filtered by node set $V^*$ will form a \say{new graph}, denoted $G_{new} = (V^*, E_{new})$. The test set is built by node pairs, that appear in the old graph, but do not form a link. Each such pair of nodes indicates a positive or a negative example depending on whether a link between them appears in the new graph. 

The training set is built on the old graph, on which we fit four XGBoost models with four sets of features: 1) baseline feature set which includes common neighbours, Adamic-Adar, resource allocation, clustering coefficient and closure coefficient; 2) baseline features plus i-quad coefficient; 3) baseline features plus o-quad coefficient; 4) baseline features plus both i-quad coefficient and o-quad coefficients. Then we evaluate their prediction performances on the test set. For large networks ($|V| > 10K$), we perform a randomised breadth first search sampling \cite{doerr2013metric} of $3K$ nodes on the original graph and repeat 10 times.

\begin{table}[t!]
\vspace{0mm}
\centering
\caption{Test set performance comparison measured in ROC-AUC score of four XGBoost models with different features. Second column lists the scores with baseline features (BL) , third column adds i-quad coefficient to baseline features, fourth column adds o-quad coefficient to baseline features, and fifth column adds both i-quad and o-quad coefficients to baseline features. An improvement of more than $2\%$ is put in bold type, and an improvement of more than $5\%$ is indicated by dagger. Last row gives the average (over the datasets) ranking of the four models for comparison, where smaller is better. A model receives rank $1$ if it has the highest ROC-AUC score, rank $2$ if it has the second highest, and so on. If two models share the best score, they both get rank $1.5$, and so on. The best ranking is put in bold italic.} 
\label{tab:link_pred_roc}
\setlength{\tabcolsep}{4.5pt} 
\def\arraystretch{1.2}
\begin{tabular}{lcccc}
\toprule
Network   & \multicolumn{1}{c}{\begin{tabular}[c]{@{}c@{}}w/ baseline\\ features (BL) \end{tabular}}   & \multicolumn{1}{c}{\begin{tabular}[c]{@{}c@{}}add I(i)\\ to BL\end{tabular}}   & \multicolumn{1}{c}{\begin{tabular}[c]{@{}c@{}}add O(i)\\ to BL\end{tabular}} & \multicolumn{1}{c}{\begin{tabular}[c]{@{}c@{}}add I(i) \& \\ O(i) to BL\end{tabular}} \\
\midrule
\textsc{FW-FloridaDry}  & 0.6703  & 0.6779 & 0.6834  & \textbf{0.6886} \\
\textsc{FW-LittleRock}  & 0.8077  & \textbf{0.8357}  & \textbf{0.8421}  & \textbf{0.8521$^\dagger$} \\
\midrule
\textsc{Soc-EmailEu$^{\tau}$}   & 0.9076  &  0.9070 &  0.9090  & 0.9084 \\
\textsc{Soc-ClgMsg$^{\tau}$}  & 0.7831  &  0.7873 &  0.7879   & 0.7920 \\
\textsc{Soc-BTCAlpha$^{\tau}$}  & 0.8588  & 0.8601   & 0.8679  & 0.8697 \\
\textsc{Soc-TwitchFr}  & 0.9160   & 0.9176  & 0.9192 & 0.9202 \\
\midrule
\textsc{PPI-Stelzl}  & 0.6565  & \textbf{0.7778$^\dagger$}  & \textbf{0.7809$^\dagger$} & \textbf{0.7764$^\dagger$} \\
\textsc{PPI-Figeys}  & 0.8171  & \textbf{0.8644$^\dagger$}  & \textbf{0.8668$^\dagger$}  & \textbf{0.8650$^\dagger$}\\
\textsc{PPI-Vidal}   & 0.7566  & \textbf{0.7973$^\dagger$} & \textbf{0.8009$^\dagger$}  & \textbf{0.7992$^\dagger$}\\
\textsc{PPI-IntAct}  & 0.8524  & \textbf{0.8808} & \textbf{0.8839} & \textbf{0.8842}\\
 \midrule
\textsc{Cit-DBLP$^{\tau}$}  & 0.7294 & 0.7261 & 0.7336 & 0.7310 \\
\textsc{Cit-Cora}    & 0.8700  & 0.8705 & 0.8726 & 0.8734 \\
 \midrule
\textsc{Rd-NewYork}  & 0.5268 & \textbf{0.5529} & \textbf{0.5538$^\dagger$} & \textbf{0.5538$^\dagger$} \\
\textsc{Rd-BayAera}  & 0.5218  & \textbf{0.5353} & \textbf{0.5353} & \textbf{0.5356}  \\
\midrule
\textsc{QA-MathOvfl.$^{\tau}$}  & 0.8546 & 0.8554  & 0.8541  & 0.8551 \\
\textsc{QA-AskUbuntu$^{\tau}$}  & 0.8746 & 0.8791 & 0.8765  & 0.8777 \\
\midrule
\midrule
\textbf{Avg. ranking}  &  3.8 & 2.8 & 1.9 & \textit{\textbf{1.5}} \\
\bottomrule
\end{tabular}
\vspace{-1mm}
\end{table}

\noindent\textbf{\textit{Results and discussion}.} Since a network link prediction is a highly unbalanced task, we choose ROC-AUC score as the metric and report the prediction result on the test set, as shown in Table~\ref{tab:link_pred_roc}. First, we discover that adding only i-quad (\nth{3} column) or o-quad coefficient (\nth{4} column) leads to improvement in most networks (except \textsc{Soc-EmailEu} and \textsc{Cit-DBLP} when i-quad is added, and \textsc{QA-MathOvfl.} when o-quad is added). And when both of the quadrangle coefficients are added to the baseline features (\nth{5} column), the performance is improved in all networks. The average ranking (last row) also shows that adding both i-quad and o-quad coefficients at the same time leads to the best performance overall. 

Second, we find that the improvement is particularly significant in food webs, protein-protein networks and road networks (more than $2\%$ in all eight networks of these three types, and more than $5\%$ in five networks when both quadrangle coefficients are added). The common characteristic of these types of networks is that they tend to have larger quadrangle coefficients compared to the clustering and closure coefficients. Also, we notice that only adding o-quad coefficient has better performance than only adding i-quad coefficient in most networks (except in two Q\&A networks), which is an interesting phenomenon for further study.


\section{Additional Related Work} \label{sec: related_works}

\begin{figure}[t]
\centerline{\includegraphics[scale = 0.75]{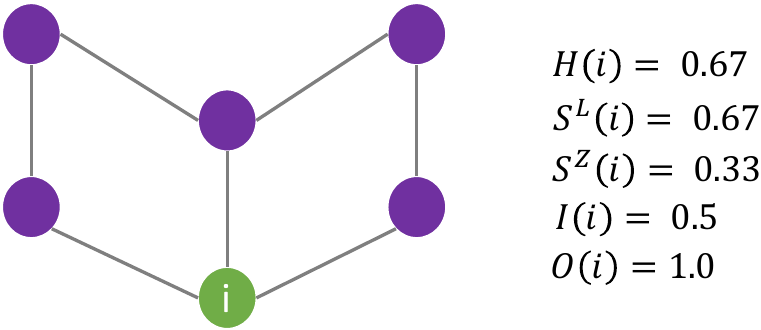}}
\caption{An example of all five coefficients measuring quadrangle formation for node $i$. $H(i)$ is the higher order clustering coefficient proposed by Fronczak et al.\cite{fronczak2002higher}; $S^L(i)$ is the square clustering coefficient proposed by Lind et al.\cite{lind2005cycles}; $S^Z(i)$ is another square clustering coefficient proposed by Zhang et al.\cite{zhang2008clustering}; $I(i)$ and $O(i)$ are the two quadrangle coefficients proposed by us.}
\label{fig:related_work}
\vspace{-0mm}
\end{figure}

We now recapitulate some additional related works that proposed other metrics to measure quadrangle formations in networks. Fronczak et al.\cite{fronczak2002higher} proposed a higher order clustering coefficient for random networks. It is defined as $C_{i}(x)=\frac{2 E_{i}(x)}{k_{i}\left(k_{i}-1\right)}$, where $i$ is the focal node and $x$ is the length of path. $E_{i}(x)$ denotes the number of $x$-length paths between the neighbours of $i$. When $x$ equals $2$, this definition deals with the formation of quadrangles. The limitation of this definition is that the normalisation only takes the degree of the focal node $i$ into account while neglects the degree of $i$'s neighbours. Since each pair of neighbours could have multiple length-$2$ paths between them, the clustering value can be larger than one.

Lind et al.\cite{lind2005cycles} later proposed a square clustering coefficient in the context of bipartite networks by taking into consideration the degree of the neighbours, in other words, the length-$2$ paths starting from the focal node.  It is defined as $C_{4, m n}(i)=\frac{q_{i m n}}{\left(k_{m}-\eta_{i m n}\right)\left(k_{n}-\eta_{i m n}\right)+q_{i m n}}$, where $m$ and $n$ are a pair of neighbours of the focal node $i$, and 
$q_{i m n}$ denotes the number of squares containing the three nodes. What is uncommon about this definition is that it deems squares are formed via node overlapping, which is not a standard approach. Zhang et al.\cite{zhang2008clustering} then modified the equation and proposed another more standard square clustering coefficient for bipartite networks. Their definition is:  $C_{4, m n}(i)=\frac{q_{i m n}}{\left(k_{m}-\eta_{i m n}\right)+\left(k_{n}-\eta_{i m n}\right)+q_{i m n}}$. However, in both of these definitions, there is no notion of open quadriad introduced, and the normalisation is thus based on the number of squares.

Our proposed i-quad and o-quad coefficients are different from the previous works in that 1) the scope of the o-quad coefficient is larger since it takes into account length-$3$ paths emanating from the focal node, whereas the square clustering coefficients only calculates length-$2$ paths in the normalisation; 2) the two coefficients proposed by us view a formed quadrangle as being built from two open quadriads, which conform with the classic clustering and closure coefficients (in their definitions a formed triangle is viewed as being built from open triads); 3) the two quadrangle coefficients are proposed for the general unipartite networks on which multiple experiments are conducted. In Figure~\ref{fig:related_work}, we provide a small example to illustrate the three coefficients proposed by previous works and the two quadrangle coefficients proposed by us.

\section{Conclusion} \label{sec: conclusion}
In this paper, we introduced the i-quad coefficient and the o-quad coefficient to measure quadrangle formation in networks, according to the different location of the focal node in an open quadriad. We also extended them to weighed networks.  Through experiments on $16$ real-world networks from six domains, we revealed that 1) in most types of networks, the average o-quad coefficient is smaller than the average i-quad coefficient; 2) in food webs, protein-protein interaction networks and road networks, the i-quad and o-quad coefficients are larger than the clustering and closure coefficients respectively; 3) the o-quad coefficient tends to increase with node degree while the i-quad coefficient does not change too much as the node degree increases.

We also demonstrated that including the two coefficients leads to improvement in both network-level and node-level analysis tasks, such as network classification and link prediction. The improvement is especially significant in food webs, protein-protein networks and road networks in link prediction task. 
Additionally, we plan to further consider the dynamics of time-varying networks and link directions of directed networks when measuring quadrangle formation in the future. Due to the simplicity and interpretability in the definitions, we anticipate that the i-quad and o-quad coefficients will become standard descriptive features and be incorporated in other network mining tasks.

\ifCLASSOPTIONcompsoc
  \section*{Acknowledgments}
This work was supported by the Australian Research Council, grant No. DP190101087: \say{Dynamics and Control of Complex Social Networks}.
\else
  \section*{Acknowledgment}
\fi

\bibliographystyle{IEEEtran}
\bibliography{Reference}

\ifCLASSOPTIONcaptionsoff
  \newpage
\fi

\end{document}